\begin{document}

\title{Stabilizing of a Class of Underactuated Euler Lagrange System Using an Approximate Model}

\author[1]{Huseyin Alpaslan Yildiz*}

\author[2]{Leyla Goren-Sumer}

\authormark{YILDIZ AND GOREN-SUMER}

\address[1]{\orgdiv{Siemens A.S}, \orgname{ADV D EU TR}, \orgaddress{\state{Istanbul}, \country{Turkey}}}

\address[2]{\orgdiv{Istanbul Technical University}, \orgname{Dept. of Control and Automation Eng.}, \orgaddress{\state{Istanbul}, \country{Turkey}} \email{leyla.goren@itu.edu.tr}}

\corres{*Huseyin Alpaslan Yildiz, Siemens Sanayi ve Tic. A.S. Yakacik Cad. No:111 34870 Istanbul, Turkey. \email{alpaslan.yildiz@siemens.com}}


\abstract[Abstract]{ 
The energy shaping method, Controlled Lagrangian, is a well-known approach to stabilize the under-actuated Euler Lagrange (EL) systems. In this approach, to construct a control rule, some nonlinear, nonhomogeneous partial differential equations (PDEs), which are called matching conditions, must be solved. In this paper, a method is proposed to obtain an approximate solution of these matching conditions for a class of under-actuated EL systems. To develop the method, the potential energy matching condition is transformed to a set of linear PDEs using an approximation of inertia matrices.  So the assignable potential energy function and the controlled inertia matrix, both are constructed as a common solution of these PDEs. Afterwards, the gyroscopic and dissipative forces are found as the solution of the kinetic energy matching condition. Finally, the control rule is constructed by adding energy shaping rule and additional dissipation injection to provide asymptotic stability. The stability analysis of the closed loop system which used the control rule derived with the proposed method is also given. To demonstrate the success of the proposed method, the stability problem of the inverted pendulum on a cart is considered.
}

\keywords{Controlled Lagrangian, energy shaping, stabilization, approximate solution of the matching conditions}

\maketitle

\section{Introduction}\label{sec1}

The stabilization problem of nonlinear systems at a desired equilibrium point is an attractive subject for control researchers ever since the tools are developed to analyze nonlinear systems. When the energy shaping and dissipation injection control policy is used, the mathematical structure of the system is preserved, but these approaches are versatile just for a special class of nonlinear systems where Euler Lagrange and port-controlled Hamiltonian systems (PCH). The energy shaping control methodology is to render the open-loop system to a closed loop system, which has a stable desired equilibrium point, via feedback. As long as the number of control input is equal to the number of degrees of freedom (DoF) of the open-loop system, the system is called full-actuated system and shaping potential energy is adequate to stabilize the closed-loop system at desired equilibrium. However under-actuated systems can be stabilize not only by modifying the potential energy of the system but also kinetic energy of the system. This idea is introduced first in Ailon and Ortega\cite{Ailon1993} and it is called total energy shaping.  In literature, there are two main approaches to shape the total energy functions of systems which are developed separately for Euler Lagrange and Hamiltonian systems, entitled Controlled Lagrangians\cite{Bloch2000} and Interconnection and damping assignment passivity-based control (IDA-PBC)\cite{Ortega1998}, respectively.

In these methods, the existence conditions of a feedback law, which stabilizes the system at a desired equilibrium point, are given as a set of nonlinear partial differential equations (PDEs) known as matching conditions. Several studies have been dedicated to derive the solution of these matching equations, as in Gomez et al\cite{Gomez2001}, Acosta et al\cite{Acosta2005} also Viola et al\cite{Viola2007} and references therein. Additionally, some methods are also proposed in Hamberg\cite{Hamberg1999}, Bloch et al\cite{Bloch2000,Bloch2001}, Ortega et al\cite{Ortega2002}, Chang\cite{Chang2005}, Auckly and Kapitansky\cite{Auckly2006}. As recognized from these works, there are lots of difficulties about solving related PDEs, so the stabilization problem is still known as a hard problem, for the underactuated case. To solve this problem, in Goren-Sumer and Yalçin\cite{Sumer2011}, it is shown that the discrete time formulation of the problem is possible and then a relatively easy technique to solve these matching equations that are in the form of PDEs is given.   Some other approaches to the same problem are proposed, i.e., in Sarras at al\cite{Sarras2013}, where a method is based on the immersion and invariance methodology and in Donaire et al\cite{Donaire2016}, it is shown that the solution of partial differential equations cannot be needed whenever the system satisfies some assumptions.

To obtain an approximate solution of these matching equations, a method based on the constructing the approximate model of the open-loop and closed-loop systems is proposed in Goren-Sumer and Sengor\cite{Sumer2015}. This approach makes it possible to derive a new matching conditions in the form of a set of linear partial differential equations and a set of linear equations to solve the stabilization problem of underactuated systems. In this study, this approach is modified and expanded to a class of underactuated EL systems. To the best of our knowledge, no similar approach is available in the literature.

\section{preliminaries}\label{sec2}

Consider an EL system in n-dimensional configuration space Q\cite{Ortega1998}:
\begin{eqnarray} \label{eq:1}
	\frac{d}{d t} \frac{\partial d}{\partial \dot{{q}}} L({q}, \dot{{q}})-\frac{\partial d}{\partial {q}} L({q}, \dot{{q}})=G({q}) u({q}, \dot{{q}})
\end{eqnarray}
In (\ref{eq:1}) if $G({q}) \in \mathbb{R}^{n \times m}$ and $\operatorname{rank} G({q})<n$, the system is called underactuated.

Let us define a desired closed loop EL system as:
\begin{eqnarray} \label{eq:2}
	\frac{d}{d t} \frac{\partial d}{\partial \dot{{q}}} L_{c}({q}, \dot{{q}})-\frac{\partial d}{\partial {q}} L_{c}({q}, \dot{{q}})+F_{c}({q}, \dot{{q}})=0
\end{eqnarray}
in which $F_{c}({q}, {\dot{q}})$ is defined as gyroscopic and/or dissipation forces \cite{Chang2002}.

In order to solve the stabilization problem for EL systems, the controlled Lagrangian design method was developed in Bloch et al\cite{Bloch2000,Bloch2001}.  The main idea in this method was to design a stabilizing controller is based on assigning a desired Lagrangian Function for control,
\begin{eqnarray} \label{eq:3}
	L_{c}({q}, {\dot{q}})=\frac{1}{2} {\dot{q}^{T}} M_{c}({q}) {\dot{q}}-V_{c}({q})
\end{eqnarray}
such that $V_{c}({q})$ has an isolated minimum point which coincides with the desired equilibrium point $({q}^{*}, \mathbf{0})$ of the closed loop system. This problem can be solved only by assigning desired potential energy function for full actuated EL systems. To solve this problem, first it is required to determine the desired closed-loop Lagrangian function $L_{c}({q}, \dot{{q}})$, namely $M_{c}({q})$ and $V_{c}({q})$, after than,  a gyroscopic force  and/or dissipation force , $F_{c}({q}, {\dot{q}})$ must be found\cite{Chang2002}.

For a simple mechanical system, a motion equation for open loop EL and desired closed loop EL system is defined respectively as follows:
\begin{eqnarray} \label{eq:4}
	M \ddot{{q}}+\frac{\partial M \dot{{q}}}{\partial {q}} \dot{{q}}-\frac{1}{2} \frac{\partial \dot{{q}}^{T} M \dot{{q}}}{\partial {q}}+\frac{\partial V}{\partial {q}}=G {u}
\end{eqnarray}
\begin{eqnarray} \label{eq:5}
	M_{c} \ddot{{q}}+\frac{\partial M_{c} \dot{{q}}}{\partial {q}} \dot{{q}}-\frac{1}{2} \frac{\partial \dot{{q}}^{T} M_{c} \dot{{q}}}{\partial {q}}+\frac{\partial V_{c}}{\partial {q}}+{F}_{c}=0
\end{eqnarray}
and $C({q}, \dot{{q}})$ which is "Coriolis and centrifugal forces" matrix\cite{Ortega1998} is defined as,
\begin{eqnarray} \label{eq:6}
	C({q}, \dot{{q}})=\frac{\partial M \dot{{q}}}{\partial {q}}-\frac{1}{2}\left(\frac{\partial M \dot{{q}}}{\partial {q}}\right)^{T}
\end{eqnarray}
The existence condition of feedback law ${u}({q}, \dot{{q}})$ which transforms the system (\ref{eq:4}) to the closed loop system (\ref{eq:5}) is given by the matching conditions and the procedure to obtain these are summarized as follows\cite{Blankenstein2002}:
\begin{eqnarray} \label{eq:7}
	\ddot{{q}}=M^{-1} G {u}-M^{-1} \frac{\partial M \dot{{q}}}{\partial {q}} \dot{{q}}+\frac{1}{2} M^{-1} \frac{\partial \dot{{q}}^{T} M \dot{{q}}}{\partial {q}}-M^{-1} \frac{\partial V}{\partial {q}}
\end{eqnarray}
\begin{eqnarray} \label{eq:8}
	\ddot{{q}}=-M_{c}^{-1} \frac{\partial M_{c} \dot{{q}}}{\partial {q}} \dot{{q}}+\frac{1}{2} M_{c}^{-1} \frac{\partial \dot{{q}}^{T} M_{c} \dot{{q}}}{\partial {q}}-M_{c}^{-1} \frac{\partial V_{c}}{\partial {q}}-M_{c}^{-1} F_{c}
\end{eqnarray}
where ${F}_{{c}}=(J+R) \dot{{q}}$ and $$J=-J^{T}, R \geq 0$$(see after the second paragraph of Definition 2.1 in Chang\cite{Chang2002}).  

Then the matching conditions, as in form of nonlinear PDEs, can be written as follows,
\begin{eqnarray} \label{eq:9}
	G^{\perp}\left(\frac{\partial V}{\partial {q}}-M M_{c}^{-1} \frac{\partial V_{c}}{\partial {q}}\right)=0
\end{eqnarray}
\begin{eqnarray} \label{eq:10}
	G^{\perp}\left[\left(\left(\frac{\partial M \dot{{q}}}{\partial {q}}-\frac{1}{2}\left(\frac{\partial M \dot{{q}}}{\partial {q}}\right)^{T}\right)-M M_{c}^{-1}\left(\frac{\partial M_{c} \dot{{q}}}{\partial {q}}-\frac{1}{2}\left(\frac{\partial M_{c} \dot{{q}}}{\partial {q}}\right)^{T}\right)-M M_{c}^{-1}(J+R)\right) \dot{{q}}\right]=0
\end{eqnarray}
where $G^{\perp} :\left(\mathbb{R}^{n-m}\right)^{T} \rightarrow\left(\mathbb{R}^{n}\right)^{T}$ is left annihilator of $G$. The equation (\ref{eq:9}) and (\ref{eq:10}) are called as the potential energy matching condition and the kinetic energy matching condition, respectively. To construct the controller, first the matrices $M_{c}({q})>0, J=-J^{T}$ and $R \geq 0$, which hold the matching conditions (\ref{eq:10}) must be found.  After that the desired potential energy function $V_{c}({q})$ which has an isolated minimum at $\left({q}^{*}, 0\right)$ should be found as a solution of (\ref{eq:9}). Energy shaping feedback rule can be obtained as, 
\begin{eqnarray} \label{eq:11}
	\begin{gathered}
	u_{c}=\left(G^{T} G\right)^{-1} G^{T}\left\{\left[\left(\frac{\partial M \dot{{q}}}{\partial {q}}-\frac{1}{2}\left(\frac{\partial M {q}}{\partial {q}}\right)^{T}\right)-M M_{c}^{-1}\left(\frac{\partial M_{c} \dot{{q}}}{\partial {q}}-\frac{1}{2}\left(\frac{\partial M_{c} \dot{{q}}}{\partial {q}}\right)^{T}\right)-M M_{c}^{-1}(J+R)\right] \dot{{q}}\right\}\\
	+\left(G^{T} G\right)^{-1} G^{T}\left\{\left[\frac{\partial V}{\partial {q}}-M M_{c}^{-1} \frac{\partial V_{c}}{\partial {q}}\right]\right\}
	\end{gathered}
\end{eqnarray}
using $M_{c}({q})>0$, $V_{c}({q})$, $J=-J^{T}$ and $R \geq 0$\cite{Blankenstein2010}. Finally, dissipation must be injected to the closed loop system to guarantee asymptotic stability of closed loop system which is given as Blankenstein et al \cite{Blankenstein2010} and Ortega and Garcia-Canseco\cite{Ortega2004},
\begin{eqnarray} \label{eq:12}
u_{d}=-K_{v} G^{T} M^{-1} M_{c} {\dot{q}}
\end{eqnarray}
where $K_{v}>0$.

\section{main results}\label{sec3}

Since the matching conditions given in (\ref{eq:9}), (\ref{eq:10}) are nonlinear and nonhomogeneous PDEs, they are too difficult to solve. Also, as mention in Blankenstein et al\cite{Blankenstein2010} and Viola et al\cite{Viola2007} there is no general solution for them. The main idea of this study is to propose a method that finds an approximate solution of PDEs given in (\ref{eq:9}) namely potential energy matching condition. To fulfill this, it is found that a potential energy function $V_{c}({q})$ has an isolated minimum point, which coincides with the desired equilibrium point $\left({q}^{*}, 0\right)$ of the closed loop system, and a controlled generalized inertia matrix $\widehat{M}_{c}({q})$ such that approximately satisfies the potential energy matching condition given in (\ref{eq:9}).

The method proposed in this paper might be summarized as follows:
\begin{step} \label{step1}
	Let us define $r$ sub-regions in the configuration space of the EL system, some scalar functions $h_i(q)$s,
	\begin{eqnarray} \label{eq:13}
	S_{i} \triangleq\left\{{q} | h_{i}({q}) \geq h_{l}({q}), \quad l=1,2, \ldots, r, \quad i \neq l\right\}
	\end{eqnarray}
	where $h_{i}({q})$s are defined as $0<h_{i}({q}) \leq 1$, also define $r$ number of points, $q_i$s, for each sub-regions such that $\left.h_{i}(q)\right|_{q=q_{i}}=1$.
	\begin{remark} \label{remark1}
		For instance, they may be chosen as radial basis functions. In this paper, the functions $h_{i}({q})$s are chosen as ${h_{i}({q})=e^{-\left(\epsilon_{i}\left\|{q}-{q}_{i}\right\|\right)^{2}}}$, in which the ${q}_{i}$s are the centers of the subregions.
	\end{remark} 
\end{step} 
\begin{step} \label{step2}
	Let us write the potential energy matching condition at ${q}={q}_{i}$s, thus $r$ number of equations are obtained as follows,
	\begin{eqnarray} \label{eq:14}
	G^{\perp}\left[\frac{\partial V}{\partial {q}}-M_{i} M_{c i}^{-1} \frac{\partial V_{c}}{\partial {q}}\right]=0, \qquad \forall i
	\end{eqnarray}
	in which $M_{i}=M\left({q}_{i}\right)$. Then find a controlled potential energy function $V_{c}({q})$ as a common solution of (\ref{eq:14}), such that $V_{c}({q})$ has the properties of $\partial V_{c}({q}) / \partial{q}|_{q=q^{*}}=0$  and $\partial^{2} V_{c}({q}) / \partial{q}^{2}>0$, and also this common solution makes it possible to find $r$ numbers of matrices $M_{c i}>0$ satisfying equation (\ref{eq:14}). For the existence of such solutions a lemma will be given later.
	\begin{remark} \label{remark2}
		The existence of the common solution of PDEs given in (\ref{eq:14}), namely the existence of $V_{c}({q})$ and $M_{c i}>0, \forall i$, for given $M_i$s, determine the class of Euler Lagrange systems which are stabilized via the method proposed here. The $M_{ci}>0$ which satisfied (\ref{eq:14}) do not need to be unique, in this case $M_{ci}$s are expressed in parametric from.
	\end{remark} 
\end{step} 
\begin{step} \label{step3}
	Let us define an generalized inertia matrix of closed loop system in terms of $h_i({q})$s defined in \textit{Step 1} and $M_{ci}$s found in \textit{Step 2},
	\begin{eqnarray} \label{eq:15}
	\widehat{M}_{c}({q})=\sum_{i=1}^{r} \left(h_{i}({q}) \widetilde{M}_{c i}+M_{c b i}\right) 
	\end{eqnarray}
	where $\widetilde{M}_{c i}+M_{c b i} = M_{ci}$. Find the parameters of $h_i\left({q}\right)$s, and constant matrices $\widetilde{M}_{c i}$ and $M_{c b i}$ for $i=1,2, \ldots, r,$ such that the following expression holds,
	\begin{eqnarray} \label{eq:16}
	\min \left\|G^{\perp}\left[\frac{\partial V}{\partial {q}}-M\widehat{M}_{c}^{-1}\left(\frac{\partial V_{c}}{\partial {q}}\right)\right]\right\|
	\end{eqnarray}
	\begin{remark} \label{remark3}
		If the $h_i ({q})$s are chosen as $h_{i}({q})=e^{-\left(\epsilon_{i}\left\|{q}-{q}_{i}\right\|\right)^{2}}$, the parameter $\epsilon_{i}$s and the bias terms are found. If the $M_{ci}$s have been expressed in parametric form mentioned in \textit{Remark 2}, then the proper values of these parameters can also be determined in this step.
	\end{remark}
\end{step} 
\begin{step} \label{step4}
	Construct an approximate generalized inertia matrix of closed loop system, $\widehat{M_c}(q)$ given in (\ref{eq:15}), using the scalar $h_i\left({q}\right)$s, and constant matrices $\widetilde{M}_{c i}$s and $M_{c b i}$s found in \textit{Step 3}.
\end{step} 
\begin{step} \label{step5}
	Find some $J({q}, \dot{{q}})=-J^{T}({q}, \dot{{q}})$ and $R({q}, \dot{{q}})=R^{T}({q}, \dot{{q}}) \geq 0$ matrices such that they hold the kinetic energy matching condition as follows,
	\begin{eqnarray} \label{eq:17}
	G^{\perp}\left[\left(\left(\frac{\partial M \dot{{q}}}{\partial {q}}-\frac{1}{2}\left(\frac{\partial M \dot{{q}}}{\partial {q}}\right)^{T}\right)-M \widehat{M}_{c}^{-1}\left(\frac{\partial \widehat{M}_{c} \dot{{q}}}{\partial {q}}-\frac{1}{2}\left(\frac{\partial \widehat{M}_{c} \dot{{q}}}{\partial {q}}\right)^{T}\right)-M \widehat{M}_{c}^{-1}(J+R)\right) \dot{{q}}\right]=0
	\end{eqnarray}
	For this aim let us consider $C({q}, \dot{{q}})$ and $\widehat{C}_{c}({q}, \dot{{q}})$ defined as,
	\begin{eqnarray} \label{eq:18}
		\begin{gathered}
		C({q}, \dot{{q}})=\frac{\partial M({q}) \dot{{q}}}{\partial {q}}-\frac{1}{2}\left(\frac{\partial M({q}) \dot{{q}}}{\partial {q}}\right)^{T} \\
		\widehat{C}_{c}({q}, \dot{{q}})=\frac{\partial \widehat{M}_{c}({q}) \dot{{q}}}{\partial {q}}-\frac{1}{2}\left(\frac{\partial \widehat{M}_{c}({q}) \dot{{q}}}{\partial {q}}\right)^{T}
		\end{gathered}
	\end{eqnarray}
	then kinetic energy matching condition given in (\ref{eq:17}) is rewritten as follows:
	\begin{eqnarray} \label{eq:19}
		G^{\perp} M \widehat{M}_{c}^{-1}\left[\widehat{M}_{c} M^{-1} C-\widehat{C}_{c}-(J+R)\right] \dot{q}=0
	\end{eqnarray}
	To obtain the solution of kinetic energy matching condition $J(q,\dot{q})$ can be taken as,
	\begin{eqnarray} \label{eq:20}
		J(q,\dot{q})=\frac{1}{2}\left[\left(\widehat{M}_{c} M^{-1} C-\widehat{C}_{c}\right)-\left(\widehat{M}_{c} M^{-1} C-\widehat{C}_{c}\right)^{T}\right]
	\end{eqnarray}
	and also any $R(q,\dot{q})\geq0$ need to be calculated which satisfies the condition below:
	\begin{eqnarray} \label{eq:21}
		G^{\perp}\left[\frac{1}{2}\left(\left(C-M \widehat{M}_{c}^{-1} \widehat{C}_{c}\right)+\left(C-M \widehat{M}_{c}^{-1} \widehat{C}_{c}\right)^{T}\right)-R\right]=0
	\end{eqnarray}
\end{step} 
\begin{step} \label{step6}
	Construct energy shaping control rule using the following relation,
	\begin{eqnarray} \label{eq:22}
	\begin{gathered}
	u_{c}=\left(G^{T} G\right)^{-1} G^{T}\left\{\left[\left(\frac{\partial M \dot{{q}}}{\partial {q}}-\frac{1}{2}\left(\frac{\partial M \dot{{q}}}{\partial {q}}\right)^{T}\right)-M \widehat{M}_{c}^{-1}\left(\frac{\partial \widehat{M}_{c} \dot{{q}}}{\partial {q}}-\frac{1}{2}\left(\frac{\partial \widehat{M}_{c} \dot{{q}}}{\partial {q}}\right)^{T}+(J+R)\right)\right] \dot{{q}}\right\} \\
	+\left(G^{T} G\right)^{-1} G^{T}\left\{\left[\frac{\partial V}{\partial {q}}-M \widehat{M}_{c}^{-1} \frac{\partial V_{c}}{\partial {q}}\right]\right\}
	\end{gathered}
	\end{eqnarray}
\end{step} 
\begin{step} \label{step7}
	Inject dissipation to satisfy system asymptotic stability where $K_v>0$,
	\begin{eqnarray} \label{eq:23}
	u_{d}=-K_{v} G^{T} M^{-1} \widehat{M}_{c} {\dot{q}}
	\end{eqnarray}
	therefore the control rule is constructed as follows,
	\begin{eqnarray} \label{eq:24}
	u=u_{c}+u_{d}
	\end{eqnarray}
\end{step}
The following lemma gives a sufficient condition for the existence of the common solution of (\ref{eq:14}) in \textit{Step 2}.
\begin{lemma} \label{lemma1}
	If there is $M_{c1}>0$ and $V_c({q})$ with the properties of $\nabla_{q} V_{c}\left({q}^{*}\right)=0$ and $V_{\text { chess }}^{{q}}\left({q}^{*}\right)>0$ which hold
	\begin{eqnarray} \label{eq:25}
	G^{\perp}\left(\frac{\partial V}{\partial {q}}-M_{1} M_{c 1}^{-1} \frac{\partial V_{c}}{\partial {q}}\right)=0
	\end{eqnarray}
	and there is $r$ number of matrices which the following equation hold:
	\begin{eqnarray} \label{eq:26}
	G^{\perp}\left(M_{i} M_{c i}^{-1}-M_{j} M_{c j}^{-1}\right)=0, \quad \forall i, j
	\end{eqnarray}
	than $V_c({q})$ is the common solution of the given PDE in (\ref{eq:14}), namely,
	$$G^{\perp}\left[\frac{\partial V}{\partial {q}}-M_{i} M_{c i}^{-1} \frac{\partial V_{c}}{\partial {q}}\right]=0, \qquad \forall i$$
\end{lemma}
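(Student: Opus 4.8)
The plan is a short algebraic reduction rather than anything analytic: I would show that, for each index $i\in\{1,\dots,r\}$, the left-hand side of the $i$-th equation in (\ref{eq:14}) is \emph{identically equal} to the left-hand side of the $i=1$ equation, which vanishes by the standing hypothesis (\ref{eq:25}). The key structural fact being used is that $G^{\perp}$ acts by left matrix multiplication, hence is linear, so differences of matching residuals can be manipulated term by term.

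First I would subtract the $i=1$ instance from the generic $i$ instance and use linearity of $G^{\perp}$ to cancel the two $\partial V/\partial {q}$ contributions:
\begin{eqnarray*}
	& & G^{\perp}\left[\frac{\partial V}{\partial {q}}-M_{i}M_{ci}^{-1}\frac{\partial V_{c}}{\partial {q}}\right]
	-G^{\perp}\left[\frac{\partial V}{\partial {q}}-M_{1}M_{c1}^{-1}\frac{\partial V_{c}}{\partial {q}}\right] \\
	& & \qquad = G^{\perp}\left(M_{1}M_{c1}^{-1}-M_{i}M_{ci}^{-1}\right)\frac{\partial V_{c}}{\partial {q}}.
\end{eqnarray*}
Next I would invoke hypothesis (\ref{eq:26}) with the index choice $j=1$, which gives $G^{\perp}(M_{i}M_{ci}^{-1}-M_{1}M_{c1}^{-1})=0$, hence $G^{\perp}(M_{1}M_{c1}^{-1}-M_{i}M_{ci}^{-1})=0$ for all ${q}$; therefore the right-hand side above is zero and the $i$-th matching expression coincides with the first. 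Combining this with (\ref{eq:25}) yields $G^{\perp}[\partial V/\partial {q}-M_{i}M_{ci}^{-1}\,\partial V_{c}/\partial {q}]=0$ for every $i$, which is exactly the assertion. The requirements $\nabla_{q}V_{c}({q}^{*})=0$ and positive-definiteness of the Hessian of $V_{c}$ at ${q}^{*}$ need no further verification, since one and the same function $V_{c}$ appears in all $r$ equations and these are properties of that single function, inherited verbatim from the $i=1$ data.

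I do not expect a genuine obstacle here: the lemma is a linear-algebra consistency statement. The only point deserving emphasis is the role of (\ref{eq:26})---because it is demanded for \emph{all} pairs $(i,j)$, it in particular ties every center ${q}_{i}$ to the distinguished center ${q}_{1}$, and it is precisely this shared "effective inertia ratio through the unactuated directions'' that allows the $r$ a priori distinct PDEs in (\ref{eq:14}) to collapse onto the single one already solved in (\ref{eq:25}). I would close by noting that (\ref{eq:26}) is the natural matrix-level compatibility condition ensuring a common $V_{c}$, even though the strictly weaker vector-level condition $G^{\perp}(M_{i}M_{ci}^{-1}-M_{j}M_{cj}^{-1})\,\partial V_{c}/\partial {q}=0$ would in principle already suffice; stating the hypothesis at the matrix level is what makes it checkable in \textit{Step 2} independently of the (as yet unknown) gradient of $V_{c}$.
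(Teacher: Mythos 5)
Your proposal is correct and follows essentially the same route as the paper: both arguments use hypothesis (\ref{eq:26}) to identify $G^{\perp}M_{i}M_{ci}^{-1}$ with $G^{\perp}M_{1}M_{c1}^{-1}$ and then invoke (\ref{eq:25}), the only cosmetic difference being that you subtract the two residuals while the paper substitutes directly under the annihilator. No gap.
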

\begin{proof}[Proof]
	Suppose satisfy the function $V_c({q})$ and $M_{ci}>0$ for $i=1$, the following relation,
	$$G^{\perp}\left(\frac{\partial V}{\partial {q}}-M_{1} M_{c 1}^{-1} \frac{\partial V_{c}}{\partial {q}}\right)=0$$
	Then, it can be written,
	$$G^{\perp} M_{i} M_{c i}^{-1}=G^{\perp} M_{j} M_{c j}^{-1}, \qquad \forall i, j$$
	so that one concludes that
	$$G^{\perp} \frac{\partial V}{\partial q}-\underbrace{G^{\perp} M_{1} M_{c 1}^{-1}}_{G^{\perp} M_{j} M_{c j}^{-1}} \frac{\partial V_{c}}{\partial q}=0, \quad \forall i, j$$
	They are nothing but only the PDEs given in (\ref{eq:14}), this proves the claim.
\end{proof}

\section{stability analysis}\label{sec4}

The standard controlled Lagrangian method guarantees the stability of the desired equilibrium over the assigned total energy function and dissipation injection. However, the method proposed here does not guarantee the stability of the system at the desired equilibrium for all choices of the number of $r$, namely  the amount of sub-regions. This is due to the fact that, the energy function of the closed loop system assigned via the control rule (\ref{eq:24}) cannot be known exactly, but only known approximately. Therefore, the existence condition of a number of the sub-regions which is guaranteed the stability must be determined. In this section, this issue will be handled by using the results given in Butz\cite{Butz1969}, Heinen\cite{Heinen1970} and Ahmadi\cite{Ahmadi2008,Ahmadi2014}.

To obtain the motion equation of the closed loop system let substitute $u({q}, \dot{{q}})$ (\ref{eq:24}) to the system equation (\ref{eq:4}),
\begin{eqnarray} \label{eq:27}
	\begin{gathered}
	M \ddot{{q}}+C \dot{{q}}+\frac{\partial V}{\partial {q}}= \\ G\left(G^{T} G\right)^{-1} G^{T}\left\{\left[C-M \widehat{M}_{c}^{-1} \widehat{C}_{c}-M \widehat{M}_{c}^{-1}(J+R)\right] \dot{{q}}+\left[\frac{\partial V}{\partial {q}}-M \widehat{M}_{c}^{-1} \frac{\partial V_{c}}{\partial {q}}\right]\right\}+G u_{d}
	\end{gathered}
\end{eqnarray}
and let consider the following relation,
\begin{eqnarray} \label{eq:28}
	G\left(G^{T} G\right)^{-1} G^{T}+\left(G^{\perp}\right)^{T}\left(G^{\perp}\left(G^{\perp}\right)^{T}\right)^{-1} G^{\perp}=I
\end{eqnarray}
where,
$$G_{n \times m}=\left[\begin{array}{c}{0} \\ {g_{m \times m}}\end{array}\right], \qquad G_{m \times n}^{\perp}=\left[\begin{array}{ll}{\tilde{g}_{p \times p}} & {0}\end{array}\right], \qquad p=(n-m)$$
thus the term $\left[I-\left(G^{\perp}\right)^{T}\left(G^{\perp}\left(G^{\perp}\right)^{T}\right)^{-1} G^{\perp}\right]$ can be used instead of $G\left(G^{T} G\right)^{-1} G^{T}$. To ease of the algebraic manipulations, assume that $g=I$, without loss of generality. Therefore, the terms of $\left[I-\left(G^{\perp}\right)^{T} G^{\perp}\right]$ and $G G^{T}$ can be used instead of $\left[I-\left(G^{\perp}\right)^{T}\left(G^{\perp}\left(G^{\perp}\right)^{T}\right)^{-1} G^{\perp}\right]$ and $G\left(G^{T} G\right)^{-1} G^{T}$, respectively, through to paper, since $\left(G^{\perp}\left(G^{\perp}\right)^{T}\right)^{-1}=I$ and $\left(G^{T} G\right)^{-1}=I$.

After some algebraic operations on (\ref{eq:27}), it is obtained as follows,  
\begin{eqnarray} \label{eq:29}
	\begin{gathered}
	M \ddot{{q}}+M \widehat{M}_{c}^{-1} \widehat{C}_{c} \dot{{q}}+M \widehat{M}_{c}^{-1}(J+R) \dot{{q}}+M \widehat{M}_{c}^{-1} \frac{\partial V_{c}}{\partial q} \\
	=-\left(G^{\perp}\right)^{T} G^{\perp}\left[C-M \widehat{M}_{c}^{-1} \widehat{C}_{c}-M \widehat{M}_{c}^{-1}(J+R)\right] \dot{{q}}-\left(G^{\perp}\right)^{T} G^{\perp}\left[\frac{\partial V}{\partial q}-M \widehat{M}_{c}^{-1} \frac{\partial V_{c}}{\partial q}\right]+G {u_{d}}
	\end{gathered}
\end{eqnarray}
and when the kinetic energy matching condition is hold, the motion equation of the closed loop system becomes as,
\begin{eqnarray} \label{eq:30}
	\widehat{M}_{c} \ddot{{q}}+\widehat{C}_{c} \dot{{q}}+(J+R) \dot{{q}}+\frac{\partial V_{c}}{\partial q}+\widehat{M}_{c} M^{-1}\left(G^{\perp}\right)^{T} G^{\perp}\left[\frac{\partial V}{\partial q}-M \widehat{M}_{c}^{-1} \frac{\partial V_{c}}{\partial q}\right]-\widehat{M}_{c} M^{-1} G {u_{d}}=0
\end{eqnarray}
To examine the stability of system given with (\ref{eq:30}), let us define a candidate Lyapunov function,
\begin{eqnarray} \label{eq:31}
	H_{c}({q}, \dot{{q}})=\frac{1}{2} \dot{{q}}^{T} \widehat{M}_{c}(q) \dot{{q}}+V_{c}(q)
\end{eqnarray}
The first derivative of this function along the trajectory is obtained from (\ref{eq:30}) and (\ref{eq:31}) as,
\begin{eqnarray} \label{eq:32}
	\dot{H}_{c}({q}, \dot{{q}})=-\dot{{q}}^{T} R \dot{{q}}-\dot{{q}}^{T} \widehat{M}_{c} M^{-1}\left(G^{\perp}\right)^{T} G^{\perp}\left[\frac{\partial V}{\partial q}-M \widehat{M}_{c}^{-1} \frac{\partial V_{c}}{\partial q}\right]-\dot{{q}}^{T} \widehat{M}_{c} M^{-1} G K_{v} G^{T} M^{-1} \widehat{M}_{c} \dot{{q}}
\end{eqnarray}
Let us define $\overline{\epsilon}(q)$ corresponded as the error of potential energy matching condition given in (\ref{eq:9}), and let define $\hat{\epsilon}(q), \epsilon(q, \dot{{q}})$, as follows,
\begin{eqnarray} \label{eq:33}
	\overline{\epsilon}({q})=G^{\perp}\left[\frac{\partial V}{\partial {q}}-M \widehat{M}_{c}^{-1} \frac{\partial V_{c}}{\partial {q}}\right], \qquad \hat{\epsilon}({q})=\widehat{M}_{c} M^{-1}\left(G^{\perp}\right)^{T} \overline{\epsilon}({q}), \qquad \epsilon({q}, \dot{{q}})=-\dot{{q}}^{T} \hat{\epsilon}({q})
\end{eqnarray}
and rewrite (\ref{eq:32}) as follows,
\begin{eqnarray} \label{eq:34}
	\dot{H}_{c}({q}, \dot{{q}})=-\dot{{q}}^{T} R \dot{{q}}-\dot{{q}}^{T} \widehat{M}_{c} M^{-1} G K_{v} G^{T} M^{-1} \widehat{M}_{c} \dot{{q}}+\epsilon({q}, \dot{{q}})
\end{eqnarray}
Let us define a function $P({q}, \dot{{q}})$ such that the first derivative is determined as follows,
\begin{eqnarray} \label{eq:35}
	\dot{P}({q}, \dot{{q}})=-\dot{{q}}^{T} R \dot{{q}}-\dot{{q}}^{T} \widehat{M}_{c} M^{-1} G K_{v} G^{T} M^{-1} \widehat{M}_{c} \dot{{q}}
\end{eqnarray}
and the relation (\ref{eq:32}) can be written as follows,
\begin{eqnarray} \label{eq:36}
	\dot{H}_{c}({q}, \dot{{q}})=\dot{P}({q}, \dot{{q}})+\epsilon({q}, \dot{{q}})
\end{eqnarray}
It is easily recognized that if potential energy matching condition could be exactly satisfied for $\widehat{M}_{c}({q})$ and $V_c({q})$, the first derivative of Lyapunov function given in (\ref{eq:31}) would be $\dot{P}({q}, \dot{{q}}) \leq 0$, therefore  the stability of the system is guaranteed. Since the $\dot{P}({q}, \dot{{q}}) \leq 0$, from (\ref{eq:36}) one can state that if $\epsilon({q}, \dot{{q}}) \leq 0$ along the trajectory, then the closed loop system is stable according to La Salle Theorem, whereas, if $\epsilon({q}, \dot{{q}})>0$ for some $({q}, \dot{{q}})$,  then the closed loop system stability cannot be determined forthrightly. So we will use the theorem given in Butz\cite{Butz1969}, Heinen\cite{Heinen1970} namely Lagrange stability, and some results on non-monotonic Lyapunov functions presented in Ahmadi\cite{Ahmadi2008,Ahmadi2014}. 
\begin{theorem} \label{thm1} \cite{Butz1969,Heinen1970}
	Suppose $f(x)$ is twice continuously differentiable, $V(x)$ is a real-valued three times continuously differentiable function defined on $R^{n}$ and $V(x) \rightarrow+\infty$ as $\|x\| \rightarrow+\infty$. Further suppose $\Omega$ is a bounded set in $R^n$ and $\widetilde{\Omega}$ its complement. Then system $\dot{x}=f(x)$ is Lagrange stable if, for some constants $\alpha_{1} \geq 0$ and $\alpha_{2} \geq 0$,
	\begin{eqnarray} \label{eq:37}
		\alpha_{2} \ddot{V}({x})+\alpha_{1} \ddot{V}({x})+\dot{V}({x})<0
	\end{eqnarray}
	for all $x \in \widetilde{\Omega}$.
	\hfill\ensuremath{\square}
\end{theorem}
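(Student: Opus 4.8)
The plan is to read ``Lagrange stable'' as: every forward solution of $\dot{x}=f(x)$ remains in a bounded set (hence is defined for all $t\ge 0$), and to build a non-monotonic Lyapunov argument directly on $V$. Fix a solution $x(\cdot)$ and put $v(t):=V(x(t))$. Since $f$ is $C^{2}$ and $V$ is $C^{3}$, $v$ is three times continuously differentiable, and $\dot V(x)=\nabla V(x)\cdot f(x)$ and $\ddot{V}(x)$ are continuous functions of $x$ alone, so that the time derivative of $\sigma(t):=\alpha_{2}\ddot{V}(x(t))+\alpha_{1}\dot{V}(x(t))+V(x(t))$ equals the left-hand side of (\ref{eq:37}) evaluated at $x(t)$; in particular $\dot\sigma(t)<0$ whenever $x(t)\in\widetilde{\Omega}$. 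First I would replace $\Omega$ by a compact sublevel set: since $\Omega$ is bounded and $V$ is continuous with $V(x)\to+\infty$ as $\|x\|\to\infty$, the set $\Omega_{c}:=\{\,x:V(x)\le c\,\}$ with $c:=1+\sup_{\bar\Omega}V$ is compact, contains $\Omega$ in its interior, and satisfies $\mathbb{R}^{n}\setminus\Omega_{c}\subseteq\widetilde{\Omega}$, so $\dot\sigma<0$ on all of $\mathbb{R}^{n}\setminus\Omega_{c}$.

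Next I would show that $\sigma$ is bounded above along $x(\cdot)$. By compactness of $\Omega_{c}$ and continuity of $\alpha_{2}\ddot{V}+\alpha_{1}\dot{V}+V$, there is a constant $K$ with $\sigma(t)\le K$ whenever $x(t)\in\Omega_{c}$, while outside $\Omega_{c}$ the function $\sigma$ is strictly decreasing. Tracking the (possibly infinitely many) entries to and exits from $\Omega_{c}$ then gives $\sigma(t)\le A:=\max\{\sigma(0),K\}$ for all $t\ge 0$: if $s^{\ast}\le t$ is the last instant at which $x$ met $\Omega_{c}$, then $\sigma(t)\le\sigma(s^{\ast})\le K$ by monotonicity on $(s^{\ast},t]$; and if $x$ never met $\Omega_{c}$ on $[0,t]$, then $\sigma(t)\le\sigma(0)$.

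The decisive step is to recover a uniform a priori bound on $v$ from the scalar differential inequality $\alpha_{2}\ddot v+\alpha_{1}\dot v+v\le A$. Since $\alpha_{1},\alpha_{2}\ge 0$, the polynomial $\alpha_{2}s^{2}+\alpha_{1}s+1$ has no root with positive real part; writing $-\lambda_{1},-\lambda_{2}$ for its roots and factoring the operator, the inequality reads $\alpha_{2}(D+\lambda_{1})(D+\lambda_{2})(v-A)\le 0$, and a Gr\"onwall/comparison estimate applied twice --- first giving $(D+\lambda_{2})(v-A)$ bounded above, then $v-A$ bounded above --- yields the claim whenever the roots lie strictly in the left half plane; the subcases $\alpha_{1}=\alpha_{2}=0$ (the classical situation $\dot v<0$ off $\Omega_{c}$) and $\alpha_{2}=0<\alpha_{1}$ are immediate. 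Once $v(t)\le B<\infty$ for all $t$, radial unboundedness of $V$ confines $x(t)$ to the compact set $\{V\le B\}$, proving boundedness and global existence; since this holds for every solution, $\dot x=f(x)$ is Lagrange stable.

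\textbf{Main obstacle.} The delicate part is this last step: extracting an a priori bound from a \emph{second-order} differential inequality when only the signs of $\alpha_{1},\alpha_{2}$ are known, with no uniform strict margin in (\ref{eq:37}); in particular, when $\alpha_{1}=0<\alpha_{2}$ the factored operator is only marginally stable, so the bound must be drawn from additional structure along the trajectory rather than from the inequality alone. There is also the bookkeeping point that $\sigma$ is monotone only while the trajectory is outside $\Omega_{c}$, so the argument must accommodate trajectories that re-enter $\Omega_{c}$ infinitely often. Carrying these estimates out rigorously for all admissible $(\alpha_{1},\alpha_{2})$ is exactly the content of Butz\cite{Butz1969} and Heinen\cite{Heinen1970}, whose result we invoke.
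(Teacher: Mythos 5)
You should first note that the paper does not prove Theorem~\ref{thm1} at all: it is stated as an imported result, with the citations to Butz and Heinen standing in for the argument and the terminal box closing the statement rather than a proof. There is therefore no in-paper argument to match your sketch against, and since your own write-up ends by invoking exactly those references for the decisive estimate, what you have produced is a reduction plus a citation, which is in substance what the paper does. Within the sketch, two points are worth recording. First, you have (sensibly, and necessarily) read the misprinted inequality (\ref{eq:37}) as $\alpha_{2}\dddot{V}+\alpha_{1}\ddot{V}+\dot{V}<0$; otherwise your $\sigma=\alpha_{2}\ddot{V}+\alpha_{1}\dot{V}+V$ would not have $\dot\sigma$ equal to its left-hand side. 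Second, the bookkeeping up to $\sigma(t)\le\max\{\sigma(0),K\}$, including the replacement of $\Omega$ by a compact sublevel set and the treatment of trajectories that re-enter it infinitely often, is sound.

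The genuine gap is the final step, and it is larger than you acknowledge. Passing from $\alpha_{2}\ddot v+\alpha_{1}\dot v+v\le A$ to an upper bound on $v$ is not merely delicate: as a statement about scalar differential inequalities it is \emph{false} whenever $\alpha_{1}^{2}<4\alpha_{2}$, not only in the marginal case $\alpha_{1}=0$. The two first-order factors are then complex, so neither admits a real one-sided comparison, and the impulse response $e^{-at}\sin(bt)$ of the second-order operator is sign-indefinite, so a nonnegative forcing can drive $v$ upward without bound. Concretely, with $\alpha_{2}=1$, $\alpha_{1}=0$, the function $w(t)=-t-\tfrac{t}{4}\sin t+\tfrac{t^{2}}{4}\cos t$ satisfies $\ddot w+w=-t(1+\sin t)\le 0$ for $t\ge 0$ yet $w(2\pi k)\to+\infty$. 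Your assertion that the double Gr\"onwall argument ``yields the claim whenever the roots lie strictly in the left half plane'' is accordingly an overclaim: it works only when the roots are real, i.e.\ $\alpha_{1}^{2}\ge 4\alpha_{2}$, and the complex-root regime is precisely the one in which $\dot V$ is allowed to be positive somewhere, which is the entire point of a non-monotonic Lyapunov theorem. The Butz--Heinen proofs do not proceed from the trajectory-wise inequality $\sigma\le A$ alone; they use the fact that $\dot V,\ddot V,\dddot V$ are functions of the state, so that (\ref{eq:37}) is a pointwise condition on all of $\widetilde\Omega$, together with a different integration scheme. As it stands, your argument either proves a weaker statement (real roots only) or is circular (it invokes the theorem it is meant to prove).
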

As a consequence of above theorem, it is possible to determine the boundary of $\|\epsilon({q}, \dot{{q}})\|$ which guarantied the Lagrange stability condition given in (\ref{eq:37}). To achieve this manner it is enough to show that there is $\alpha_1>0$ and $\alpha_2>0$ exists which holds, 
\begin{eqnarray} \label{eq:38}
	\alpha_{2} \ddot{P}({q}, \dot{{q}})+\alpha_{1} \ddot{P}({q}, \dot{{q}})+\dot{P}({q}, \dot{{q}})+\alpha_{2} \ddot{e}({q}, \dot{{q}})+\alpha_{1} \dot{\epsilon}({q}, \dot{{q}})+\epsilon({q}, \dot{{q}})<0
\end{eqnarray}
There must be the amount of sub-regions $r$ that makes it possible for $\alpha_1\geq0$ and $\alpha_2\geq0$, such that satisfies (\ref{eq:38}), namely Lagrange stability of the system has been guaranteed. Existence such sub-regions will be clarified through examples.  
Furthermore, it might be useful to give the following version of above theorem.
\begin{theorem} \label{thm2} \cite{Ahmadi2008}
	Consider the continuous time system $\dot{x}=f(x)$. If there exists scalars $\alpha_1\geq0$ and $\alpha_2\geq0$, and three times differentiable Lyapunov function $V(x)$, such that 
	\begin{eqnarray} \label{eq:39}
		\alpha_{2} \ddot{V}({x})+\alpha_{1} \ddot{V}({x})+\dot{V}({x})<0
	\end{eqnarray}
	for all $x\neq0$, then for any $x(0)$  $V(x(t)) \rightarrow 0$ as $t \rightarrow \infty$ and the origin of the system $\dot{x}=f(x)$ is globally asymptotic stable. 
	\hfill\ensuremath{\square}
\end{theorem}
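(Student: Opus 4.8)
\textbf{Proof proposal for Theorem \ref{thm2}.} The plan is to expose the cascade structure --- a pure integrator in series with a stable second--order filter --- that is hidden inside the third--order inequality (\ref{eq:39}), and thereby collapse it to a one--line monotonicity statement about a \emph{filtered} copy of $V$. Along a trajectory write $v(t):=V(x(t))$; since $\dot x=f(x)$ is autonomous and $V$ is $C^{3}$, the quantities $\dot v,\ddot v$ and $v^{(3)}$ are the iterated Lie derivatives $L_fV,\,L_f^{2}V,\,L_f^{3}V$ evaluated at $x(t)$, and $f(0)=0$ makes all of them vanish at the origin. Taking the standard convention that a Lyapunov function is positive definite and (for the global claim) radially unbounded, we have $v\ge 0$, and $v(t)\to 0$ is equivalent to $x(t)\to 0$. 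Condition (\ref{eq:39}) then reads $\alpha_2 v^{(3)}+\alpha_1\ddot v+\dot v<0$ for $x\ne 0$.

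First I would introduce the filtered function $w(t):=\alpha_2\ddot v(t)+\alpha_1\dot v(t)+v(t)$. Because constant--coefficient differential operators commute, $\dot w=\alpha_2 v^{(3)}+\alpha_1\ddot v+\dot v<0$, so $w$ is strictly decreasing along every trajectory that has not reached the origin (if it does, there is nothing to prove). Reading $\alpha_2\ddot v+\alpha_1\dot v+v=w(t)$ as a forced linear ODE whose characteristic polynomial $\alpha_2 s^{2}+\alpha_1 s+1$ is Hurwitz --- with the coefficient of $\dot v$ normalized to $1$ this is exactly $\alpha_1,\alpha_2>0$, the degenerate lower--order choices of the $\alpha_i$ reducing to classical Lyapunov/LaSalle statements --- I would decompose $v=v_h+g\ast w$, where $v_h$ is the exponentially decaying solution of the homogeneous equation carrying the initial data $(v(0),\dot v(0))$, $g$ is the impulse response of $1/(\alpha_2 s^{2}+\alpha_1 s+1)$ so that $g\in L^{1}(0,\infty)$ with $\int_0^\infty g=1$ (and $g\ge 0$ when $\alpha_1^{2}\ge 4\alpha_2$), and $(g\ast w)(t)=\int_0^{t}g(t-s)w(s)\,ds$.

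Next I would show that $w$ converges and identify its limit. Being monotone, $w(t)$ either tends to a finite $w_\infty$ or to $-\infty$; the second case, through $v=v_h+g\ast w$ and the $L^{1}$/nonnegativity properties of $g$, forces $v(t)\to-\infty$, which contradicts $v\ge 0$. Hence $w\to w_\infty$, and the same convolution identity (using $\int g=1$) gives $v(t)\to w_\infty$, so $w_\infty\ge 0$. If $w_\infty>0$, then $v$ bounded (so, by radial unboundedness, $x(t)$ bounded) together with $v(t)\ge w_\infty/2$ eventually confines the tail of the trajectory to a compact set $K$ with $0\notin K$; on $K$ the continuous function $x\mapsto\alpha_2 L_f^{3}V(x)+\alpha_1 L_f^{2}V(x)+L_fV(x)$ is strictly negative, hence $\le-\delta<0$, so $\dot w\le-\delta$ eventually and $w\to-\infty$ --- a contradiction. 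Therefore $w_\infty=0$, i.e. $V(x(t))\to 0$ and $x(t)\to 0$ for every $x(0)$: global attractivity. Lyapunov stability of the origin comes from the same decomposition: for $\|x(0)\|$ small, $w(0)$ and $v(0),\dot v(0)$ are small while $0=w_\infty\le w(t)\le w(0)$, so $|v(t)|\le C(|v(0)|+|\dot v(0)|)+\|g\|_{1}\,w(0)$ is uniformly small, and then $\|x(t)\|$ is uniformly small by positive definiteness. Attractivity together with Lyapunov stability is global asymptotic stability.

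I expect the real obstacle to be exactly this convergence step for the filtered quantity $w$ --- ruling out $w\to-\infty$ and then pinning $w_\infty=0$ --- since that is where the ``integrator $\circ$ \emph{Hurwitz} filter'' reading of (\ref{eq:39}) and the sign constraint $v\ge 0$ genuinely do the work. The estimates are transparent when $\alpha_1^{2}\ge 4\alpha_2$ (real poles, $g\ge 0$) and otherwise need a short integration--by--parts argument against the $L^{1}$ kernel $g$, plus separate care in the marginal cases where $\alpha_2 s^{2}+\alpha_1 s+1$ fails to be Hurwitz. Everything else --- exponential decay of a Hurwitz second--order filter, the input--to--state convolution bound, and the Barbalat/compactness argument on $K$ --- is routine, and I would dispatch it quickly.
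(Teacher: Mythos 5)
The paper does not actually prove this statement: Theorem~2 is imported verbatim from Ahmadi (2008), stated with a closing box and no argument (note also that the first term of (39) is a typo for the \emph{third} derivative of $V$, which you correctly read it as). So there is no in-paper proof to compare yours against, and your proposal has to be judged on its own. Most of it holds up. The filtered quantity $w=\alpha_2\ddot v+\alpha_1\dot v+v$ is indeed strictly decreasing, the variation-of-parameters identity $v=v_h+g\ast w$ is valid for any $C^2$ function $v$, and the integration-by-parts device you gesture at does close the crux step even for the oscillatory kernel: the step response $G(\sigma)=\int_0^\sigma g$ of $1/(\alpha_2 s^2+\alpha_1 s+1)$ satisfies $0\le G\le 2$ and $G\to 1$, which, combined with $\dot w<0$, both rules out $w\to-\infty$ (via $v\ge 0$) and yields the uniform bound $v\le|v_h|+2|w(0)|$ needed for Lyapunov stability. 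That same bound also closes a step you did not mention at all: forward completeness of the solution, without which ``$w$ is monotone on $[0,\infty)$'' is not yet meaningful. Your explicit strengthening of the hypotheses (positive definiteness, radial unboundedness, $f(0)=0$, enough smoothness of $f$ for $L_f^3V$ to be continuous) is also genuinely necessary, e.g.\ in the compactness argument pinning $w_\infty=0$; the theorem as printed is unusable without it.

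The one genuine gap is the case $\alpha_1=0<\alpha_2$, which the hypotheses $\alpha_1\ge 0$, $\alpha_2\ge 0$ explicitly allow and which you only flag as needing ``separate care.'' There $\alpha_2 s^2+\alpha_1 s+1$ has purely imaginary roots, $v_h$ does not decay, $g\notin L^1$, and every quantitative step of your argument --- the $L^1$ convolution bound, the limit $g\ast w\to w_\infty$, and the stability estimate --- collapses. Since the inequality holding for the pair $(0,\alpha_2)$ does not imply it holds for any $(\alpha_1',\alpha_2)$ with $\alpha_1'>0$, you cannot perturb your way out: you must either supply a separate argument for the undamped filter or restrict the statement to $\alpha_1>0$ (together with the degenerate subcases $\alpha_2=0$, which your method does cover). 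As written, the proposal proves a slightly weaker theorem than the one stated.
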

Moreover, Ahmadi\cite{Ahmadi2008} shows that existence of a monotonically decreasing Lyapunov function along the trajectories of dynamical system is not required for stability, instead converging to zero in limit is sufficient to guarantee stability.As a consequence of these results, the following corollary might be given under the condition of ${\left\lVert\overline{\epsilon}({q})\right\rVert}<\infty$ and so $\|\hat{\epsilon}(q)\|<\infty$.
\begin{corollary} \label{crl1}
	Let define,  
	$$ E(q) \triangleq \int \hat{\epsilon}^{T}(q) d q $$
	then the below relation is obtained for the term $\dot{q}^{T} \hat{\epsilon}(q)$ in (\ref{eq:36}), as follows,
	$$\hat{\epsilon}^{T}(q)=\left(\frac{\partial E(q)}{\partial q}\right)^{T}$$
	$$\dot{q}^{T} \hat{\epsilon}(q)=\frac{d E(q)}{d t}=\dot{q}^{T} \frac{\partial E(q)}{\partial q}$$
	it is clear that $H_{c}(q, \dot{q})$ can be written from (\ref{eq:36}) using (\ref{eq:33}) as follows,
	$$H_{c}(q, \dot{q})=P(q, \dot{q})-\int \dot{q}^{T} \hat{\epsilon}(q) d t$$
	after the following algebraic manipulations, 
	$$\int \dot{q}^{T} \hat{\epsilon}(q) d t=\int \frac{d E(q)}{d t} d t=\int d(E(q))=E(q)$$
	$$ \int_{0}^{t} \dot{q}^{T} \hat{\epsilon}(q) d t=\int_{q_{0}}^{q_{t}} \hat{\epsilon}^{T}(q) d q=E\left(q_{t}\right)-E\left(q_{0}\right) \leq 2\|E\| \quad \forall t$$
	$H_{c}(q, \dot{q})$ can be obtained as follows
	$$H_{c}(q, \dot{q})=P(q, \dot{q})-E(q)$$
	finally we stated that,
	$$H_{c}(q, \dot{q}) \leq P(q, \dot{q})-2\|E\|$$
	Since $\lim _{t \rightarrow \infty} P(q, \dot{q}) \rightarrow 0$, it is easily seen that $\lim _{t \rightarrow \infty} H_{c}(q, \dot{q}) \rightarrow 0$, as long as the term $\|E\|<\infty$. As consequent of the interpretation of Theorem 2, we can say that there exists a choice of the number of $r$ which allows the term $\|E\|$ to be finite, hence the control rule in (\ref{eq:22}) and (\ref{eq:24}) is stabilized the system given in (\ref{eq:1}) or equivalently (\ref{eq:4}), at the desired equilibrium point with dissipation gain $K_v$. \hfill\ensuremath{\square}
\end{corollary}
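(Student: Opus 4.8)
The plan is to turn the Lyapunov-rate identity (\ref{eq:36}) into an integrated identity that displays $H_{c}$ as a bounded perturbation of the genuinely nonincreasing function $P$, and then to feed the result into the non-monotonic Lyapunov theorems (Theorem \ref{thm1}, Theorem \ref{thm2}) to conclude that $H_{c}$, and hence the state, converges to the desired equilibrium provided $r$ is large enough.

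First I would observe that the term $\epsilon(q,\dot{q})=-\dot{q}^{T}\hat{\epsilon}(q)$ in (\ref{eq:36}) depends on $\dot{q}$ only through the left factor $\dot{q}^{T}$, so along any solution it is the total time derivative of a function of $q$ alone. Treating $\hat{\epsilon}(q)$ as a \emph{gradient field} and setting $E(q)\triangleq\int\hat{\epsilon}^{T}(q)\,dq$ gives $\hat{\epsilon}^{T}(q)=(\partial E(q)/\partial q)^{T}$ and therefore $\dot{q}^{T}\hat{\epsilon}(q)=\dot{q}^{T}\,\partial E(q)/\partial q=dE(q)/dt$; if $\hat{\epsilon}$ is not exact one instead keeps the trajectory line integral $\int_{q_{0}}^{q_{t}}\hat{\epsilon}^{T}(q)\,dq$, which is all that is used below.

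Next I would integrate (\ref{eq:36}) from $0$ to $t$ to obtain the structural identity $H_{c}(q(t),\dot{q}(t))=P(q(t),\dot{q}(t))-E(q(t))+c$, where $c$ is a constant fixed by the initial data; thus $H_{c}$ is $P$ — which is nonincreasing, since $\dot{P}\leq 0$ in (\ref{eq:35}) whenever $R\geq 0$ and $K_{v}>0$ — perturbed only by $-E(q)$, a term bounded by $\|E\|\triangleq\sup_{q}|E(q)|$, which is finite under the standing hypotheses $\|\overline{\epsilon}(q)\|<\infty$ and $\|\hat{\epsilon}(q)\|<\infty$. Taking $V=H_{c}$ in Theorem \ref{thm2} and using (\ref{eq:36}) to split its hypothesis, the required third-order inequality breaks into the two bracketed groups of (\ref{eq:38}); hence if (\ref{eq:38}) can be met with some $\alpha_{1},\alpha_{2}\geq 0$, Theorem \ref{thm2} yields $H_{c}(q(t),\dot{q}(t))\to 0$, and since $V_{c}$ has a strict isolated minimum at $q^{*}$ and $\widehat{M}_{c}(q)>0$ this forces $(q,\dot{q})\to(q^{*},0)$.

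The decisive, and hardest, step is to show that the number of sub-regions $r$ can be chosen so that $\|E\|<\infty$ and (\ref{eq:38}) is attainable simultaneously: enlarging $r$ drives down the pointwise potential-energy matching error $\overline{\epsilon}(q)$ through the minimization in Step \ref{step3}, hence shrinks $\hat{\epsilon}(q)=\widehat{M}_{c}M^{-1}(G^{\perp})^{T}\overline{\epsilon}(q)$, the function $E(q)$, and the sign-indefinite perturbation $\epsilon(q,\dot{q})$, until admissible $\alpha_{1},\alpha_{2}$ exist. Making this quantitative — a bound on $\|E\|$ as a function of $r$ that is uniform on the region of interest, together with an explicit $\alpha_{1},\alpha_{2}$ for that $r$ — is not available in closed form for the whole class, so it is to be exhibited on the inverted-pendulum-on-a-cart example. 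The remaining delicate points are the tacit treatment of $\hat{\epsilon}(q)$ as a gradient field, its uniform boundedness and integrability away from $q^{*}$, and the nonincreasing property $\dot{P}\leq 0$ underlying the whole decomposition.
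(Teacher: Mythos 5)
Your proposal follows essentially the same route as the paper's own argument: integrate the rate identity (\ref{eq:36}) to exhibit $H_{c}$ as the nonincreasing function $P$ perturbed by the bounded potential $E(q)$ of the matching-error field $\hat{\epsilon}$, then invoke the non-monotonic Lyapunov results (Theorem \ref{thm1}, Theorem \ref{thm2} via (\ref{eq:38})) to conclude convergence for a suitable number $r$ of sub-regions. You are in fact somewhat more careful than the paper --- you keep the integration constant, flag that $E$ is only well defined when $\hat{\epsilon}$ is an exact (gradient) field and otherwise must be replaced by the trajectory line integral, and avoid the paper's sign-slipped final inequality $H_{c}(q,\dot{q}) \leq P(q,\dot{q})-2\|E\|$ --- while deferring the quantitative choice of $r$ to the cart--pendulum example, exactly as the paper does.
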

\section{cart and pendulum}\label{sec5}

In this section, we apply the preceding design methodology to the problem of stabilizing cart and pendulum shown in Fig. \ref{figCartPendulum}. We show that the method introduced in this paper offers a new method to solve matching conditions which provides closed loop stability.
\begin{figure}[ht]
\centerline{\includegraphics[width=342pt,height=9pc,keepaspectratio]{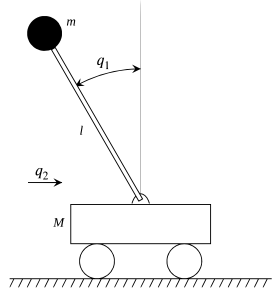}}
\caption{Cart and Pendulum System.\label{figCartPendulum}}
\end{figure}
The dynamic equation of the cart and pendulum is given in van der Schaft\cite{Schaft1996}:
\begin{eqnarray} \label{eq:41}
	\begin{gathered}
	{q}=\left[\begin{array}{l}{q_{1}} \\ {q_{2}}\end{array}\right], \quad M({q})=\left[\begin{array}{cc}{1} & {b \cos q_{1}} \\ {b \cos q_{1}} & {c}\end{array}\right], \quad V=a \cos q_{1} \\
	a=\frac{g}{l}, \qquad b=\frac{1}{l}, \qquad c=\frac{m+M}{l^{2} m}
	\end{gathered}
\end{eqnarray}
$m$ is the mass of the pendulum, $M$ is mass of cart, $l$ is the length of the pendulum and $g$ is the gravity. The position of the cart is the equilibrium point of the system which stabilized for  $q_1^*=0$ and an arbitrary $q_2^*$.
The solution procedure will be given as follows step by step. 
\setcounter{step}{0}
\begin{step}
	Let us define $r=13$ sub-regions in the configuration space of the EL system;
	\begin{eqnarray} \label{eq:42}
		\mathbb{S}_{i} \triangleq\left\{{q} | h_{i}({q}) \geq h_{l}({q}), \quad l=1,2, \ldots, 13, \quad i \neq l, \quad-\frac{\pi}{2}<q<\frac{\pi}{2}\right\}	
	\end{eqnarray}	
	where $h_{i}({q})^{\prime}$s are chosen as $h_{i}({q})=e^{-\left(\epsilon_{i l}\left\|{q}-{q}_{i}\right\|\right)^{2}}$. Note that the ${q}_{i}^{\prime}$s are the value of generalized coordinates which correspond $h_{i}\left.({q})\right|_{q=q_{i}}=1$
\end{step}
\begin{step}
	Write the potential energy matching condition at the centers of these sub-regions, 
	
	Let us define a matrix to solve condition given in (\ref{eq:14})
	\begin{eqnarray} \label{eq:43}
		S_{i}=M_{i} M_{c i}^{-1}
	\end{eqnarray}	
	Potential energy matching condition of the cart and pendulum system, where parameters is given as follows (\ref{eq:41}),
	\begin{eqnarray} \label{eq:44}
		\begin{gathered}
		G^{\perp}\left[\left[\begin{array}{cc}{m g l \sin q_{1}} \\ {0} & {0}\end{array}\right]-S_{i}\left[\begin{array}{c}{\partial V_{c} / \partial q_{1}} \\ {\partial V_{c} / \partial q_{2}}\end{array}\right]\right]=0 \\
		m g l \sin q_{1}=S_{i}(1,1) \frac{\partial V_{c}}{\partial q_{1}}+S_{i}(1,2) \frac{\partial V_{c}}{\partial q_{2}}
		\end{gathered}
	\end{eqnarray}
	General solution of this PDE as, 
	\begin{eqnarray} \label{eq:45}
		\begin{gathered}
		V_{c}({q})=\frac{a}{S_{i}(1,1)} \cos \left(q_{1}\right)+\Phi(z({q})) \\
		z({q})=\left(q_{2}-q_{2}^{*}\right)-\frac{S_{i}(1,2)}{S_{i}(1,1)} q_{1}
		\end{gathered}
	\end{eqnarray}
	where $\Phi(z(q))$ is an arbitrary differentiable function which satisfies the condition $\nabla_{q} \Phi(z(0))=0 .$  According to Remark \ref{remark2} to establish $V_{\text { hess }}\left(q^{*}\right)>0$, it is taken as $\Phi(z)=\left(K_{p} / 2\right) z^{2}$ where $K_{p}>0$
	\begin{eqnarray} \label{eq:46}
		V_{c}({q})=K_{p}\left(\frac{q_{2}^{2}}{2}-q_{2} q_{c 2}+\frac{q_{c 2}^{2}}{2}+\frac{q_{1}\left(-q_{2}+q_{c 2}\right) s_{12}}{s_{11}}+\frac{q_{1}^{2} s_{12}^{2}}{2 s_{11}^{2}}\right)+\frac{a \cos q 1}{s_{11}}
	\end{eqnarray}
	The gradient and Hessian of the $V_c ({q})$ is obtained as follows:
	\begin{eqnarray} \label{eq:47}
		\nabla_{{q}} V_{c}({q})=\left[\begin{array}{c}{\frac{K_{p} s_{12}\left(-q_{2}+q_{2}^{*}+\frac{q_{1} s_{12}}{s_{11}}\right)-a \sin q_{1}}{s_{11}}} \\ {-K_{p}\left(-q_{2}+q_{2}^{*}+\frac{q_{1} s_{12}}{s_{11}}\right)}\end{array}\right]
	\end{eqnarray}
	\begin{eqnarray} \label{eq:48}
		V_{c_{hess}}^{{q}}({q})=\left[\begin{array}{cc}{\frac{\left(K_{p} s_{12}^{2}-a s_{11} \cos q_{1}\right)}{s_{11}^{2}}} & {-\frac{K_{p} s_{i}(1,2)}{S_{i}(1,1)}} \\ {-\frac{K_{p} S_{i}(1,2)}{S_{i}(1,1)}} & {K_{p}}\end{array}\right]
	\end{eqnarray}
	To establish $V_{c_{hess}}^{q}(q)>0,$ next conditions are found as:
	\begin{eqnarray} \label{eq:49}
		S_{i}(1,1)<0 \quad and \quad K_{p}>0
	\end{eqnarray}
\end{step}
\begin{step}
	Let us define an generalized inertia matrix of closed loop system in terms of $h_i({q})$s

	First we need to find additional conditions which holds $M_{ci}>0$. $M_{ci}$ matrices are solved $\forall i$ which are described respect to $S_i$ matrix elements. The matrices $M_i$s are given as:
	\begin{eqnarray} \label{eq:50}
		M_{i}=\left[\begin{array}{cc}{1} & {m_{i}} \\ {m_{i}} & {c}\end{array}\right], \qquad i=1,2, \dots, 11
	\end{eqnarray}
	where $m_{i}=b \cos q_{1}^{i}$ and $q_{1}^{i}=\frac{\pi}{2}-i\frac{\pi}{12}$, thus values of $q_1=\pm\frac{\pi}{2}$ have not been evaulated because they cause singularity, then the general form of $M_{c i}$s are obtained as follows,
	\begin{eqnarray} \label{eq:51}
		M_{c i}=\left[\begin{array}{cc}{\frac{-m_{i} S_{i}(1,2)+S_{i}(2,2)}{-S_{i}(1,2) S_{i}(2,1)+S_{i}(1,1) S_{i}(2,2)}} & {\frac{-6 S_{i}(1,2)+m_{i} S_{i}(2,2)}{-S_{i}(1,2) S_{i}(2,1)+S_{i}(1,1) S_{i}(2,2)}} \\ {\frac{m_{i} S_{i}(1,1)-S_{i}(2,1)}{-S_{i}(1,2) S_{i}(2,1)+S_{i}(1,1) S_{i}(2,2)}} & {\frac{6 S_{i}(1,1)-m_{i} S_{i}(2,1)}{-S_{i}(1,2) S_{i}(2,1)+S_{i}(1,1) S_{i}(2,2)}}\end{array}\right], \quad i=1,2, \ldots, 11
	\end{eqnarray}
	To ensure that $M_{ci}>0$ $\forall i$, the following conditions, which are derived by using Mathematica, must be satisfied,
	\begin{eqnarray} \label{eq:52}
		\begin{gathered}
			S_{i}(1,2)>0, \quad-\frac{S_{i}(1,1)}{S_{i}(1,2)}<\min \left\{m_{i}\right\}, \\ 
			\quad S_{i}(2,1)<\frac{m_{1} S_{i}^{2}(1,1)+6 S_{1}(1,1) S_{i}(1,2)}{S_{i}(1,1)+m_{i} S_{i}(1,2)}, \quad S_{i}(2,2)=\frac{m_{i} S_{i}(1,1)+6 S_{i}(1,2)-S_{i}(2,1)}{m_{i}}
		\end{gathered}
	\end{eqnarray}
	in which $S_{i}(1,1)<0$. Let us define some constant scalars as $\alpha_{1}=-\frac{S_{i}(1,1)}{S_{i}(1,2)}$, $\alpha_{2}=S_{i}(1,2)$. From the second condition of (\ref{eq:52}), it is easily realized that $0<\alpha_{1}<\min \left\{m_{i}\right\}$ and $\alpha_2>0$  and for the third condition of (\ref{eq:52}) let us define the scalar $\beta>0$. Finally the conditions which are guaranteed $M_{ci}>0$ $\forall i$ can be given as follows,
	\begin{eqnarray} \label{eq:53}
		\begin{gathered}
			S_{i}(1,1)=-\alpha_{1} \alpha_{2}, \quad S_{i}(1,2)=\alpha_{2}, \quad \\
			S_{i}(2,1)=\frac{\alpha_{1} \alpha_{2}\left(6-\alpha_{1} m_{i}\right)}{\alpha_{1}-m_{i}}-\beta, \quad S_{i}(2,2)=\frac{\left(6 \alpha_{2}+\beta\right) m_{i}-\alpha_{1}\left(\beta+\alpha_{2} m_{i}^{2}\right)}{m_{i}\left(-\alpha_{1}+m_{i}\right)}
		\end{gathered}
	\end{eqnarray}
	The matrices $M_{ci}$s can be rewritten in the terms of $\alpha_1$, $\alpha_2$, $\beta$ as follows, 
	\begin{eqnarray} \label{eq:54}
		M_{c i}=\left[\begin{array}{cc}{\frac{-\alpha_{1} \beta+m_{i}\left(\beta-\alpha_{2}\left(-6+m_{i}^{2}\right)\right)}{\alpha_{2} \beta\left(\alpha_{1}-m_{i}\right)^{2}}} & {\frac{m_{i}\left(\beta m_{i}-\alpha_{1}\left(\beta+\alpha_{2}\left(-6+m_{i}^{2}\right)\right)\right)}{\alpha_{2} \beta\left(\alpha_{1}-m_{i}\right)^{2}}} \\ {\frac{m_{i}\left(\beta m_{i}-\alpha_{1}\left(\beta+\alpha_{2}\left(-6+m_{i}^{2}\right)\right)\right)}{\alpha_{2} \beta\left(\alpha_{1}-m_{i}\right)^{2}}} & {-\frac{m_{i}\left(\alpha_{1} \beta m_{i}-\beta m_{i}^{2}+\alpha_{1}^{2} \alpha_{2}\left(-6+m_{i}^{2}\right)\right)}{\alpha_{2} \beta\left(\alpha_{1}-m_{i}\right)^{2}}}\end{array}\right], \quad i=1,2, \ldots, 11
	\end{eqnarray}
	in which $\beta>0$, $\alpha_2>0$, $0<\alpha_{1}<\min\{m_{i}\}$ $\forall i$. Since the existence of $\alpha_1$ is guaranteed as long as $\min\{m_{i}\}>0$, which is always positive, so it is better to define a new parameter $0<\gamma<1$ such that $\gamma=1-\frac{\alpha_{1}}{\min\{m_{i}\}}$.
\end{step}
\begin{step}
	Construct an approximate generalized inertia matrix of closed loop system: 
	\begin{eqnarray} \label{eq:55}
		\widehat{M}_{c}({q})=\sum_{i}\left( e^{-\left(\epsilon_{i}\left(q_{i}-q\right)\right)^{2}} \widetilde{M}_{c i}+M_{c b i}\right)
	\end{eqnarray}
	Using Matlab neural network toolbox, the parameters $\epsilon_{i}$s, $\widetilde{M}_{c i}s$ and $M_{c b i}$s have been found explained in Remark \ref{remark3} which is satisfied (\ref{eq:16}). The approximation error of potential energy matching condition $\overline{\epsilon}({q})$, is given in (\ref{eq:33}), is plotted in Figure \ref{figCartPendulumPotError} in the case of 13 amount of sub-regions in the range of $\left\{-1.309<q_{1}<1.309\right\}$
	\begin{figure}[ht]
	\centerline{\includegraphics[width=450pt,height=18pc,keepaspectratio]{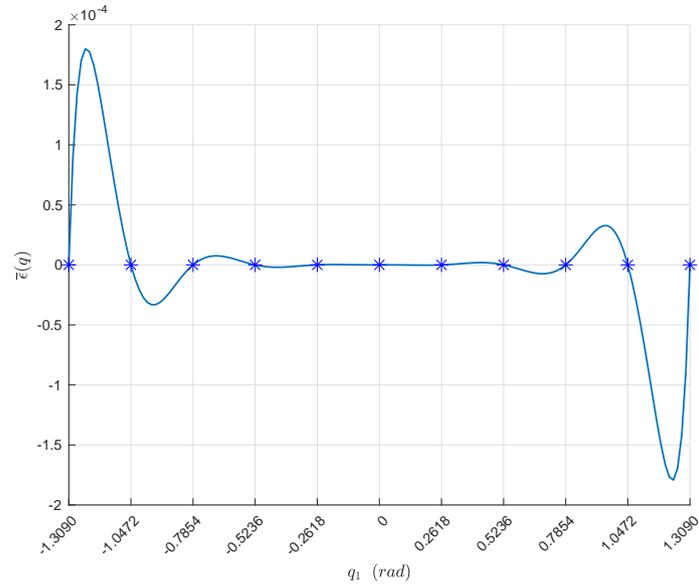}}
	\caption{Potential energy matching condition error, $\overline{\epsilon}({q})$ respect to $q_1$.\label{figCartPendulumPotError}}
	\end{figure}
\end{step}
\begin{step}
	Find some $J({q}, \dot{{q}})=-J^{T}({q}, \dot{{q}})$ and $R({q}, \dot{{q}})=R^{T}({q}, \dot{{q}}) \geq 0$ matrices such that hold the kinetic energy matching condition given in (\ref{eq:17}). 
	$J({q}, \dot{{q}})$ can be chosen as in (\ref{eq:20}) and $R({q}, \dot{{q}})$ can be chosen as follows for Cart and Pendulum system:

	\begin{eqnarray} \label{eq:56}
		R(q, \dot{q})=\left[\begin{array}{cc}{\frac{\varphi \Sigma(1,2)^{2}}{\Sigma(1,1)^{2}}} & {-\frac{\varphi \Sigma(1,2)}{\Sigma(1,1)}+\frac{\Sigma(1,1) \Pi(1,1)}{\Sigma(1,2)}+\Pi(1,2)} \\ {-\frac{\varphi \Sigma(1,2)}{\Sigma(1,1)}+\frac{\Sigma(1,1) \Pi(1,1)}{\Sigma(1,2)}+\Pi(1,2)} & {\varphi-\frac{\Sigma(1,1)^{2} \Pi(1,1)}{\Sigma(1,2)^{2}}+\Pi(2,2)}\end{array}\right]
	\end{eqnarray}
	in which $\Sigma ({q}, \dot{{q}})$ and $\Pi ({q}, \dot{{q}})$ defined as follows:
	\begin{eqnarray} \label{eq:57}
		\begin{gathered}
		\Sigma=G^{\perp} M M_{c}^{-1} \\
		\Pi=\frac{1}{2}\left(M_{c} M^{-1} C-C_c+\left(M_{c} M^{-1} C-C_c\right)^{T}\right)
		\end{gathered}
	\end{eqnarray}
	where $R({q}, \dot{{q}}) \geq 0$ for a constant $\varphi \geq 0$.
\end{step}
\begin{step}
	Construct energy shaping control rule.
	The control input, obtained from matching conditions is defined in (\ref{eq:22}).  The controller design parameters are chosen as, $K_{p}=\frac{1}{100}, \gamma=0.5, \alpha_{2}=20, \beta=30, \varphi=0.002$.
	\begin{eqnarray} \label{eq:58}
		u_{c}=\left(G^{T} G\right)^{-1} G^{T}\left\{\left[C(q, \dot{q})-M \widehat{M}_{c}^{-1}\left(C_{c}(q, \dot{q})+(J+R)\right)\right] \dot{q}+\left[\frac{\partial V}{\partial q}-M \widehat{M}_{c}^{-1} \frac{\partial V_{c}}{\partial q}\right]\right\}
	\end{eqnarray}
\end{step}
\begin{step}
	Closed loop system potential energy function is plotted in Figure \ref{figVc} for control parameters given below.
	\begin{figure}[ht]
	\centerline{\includegraphics[width=342pt,height=20pc,keepaspectratio]{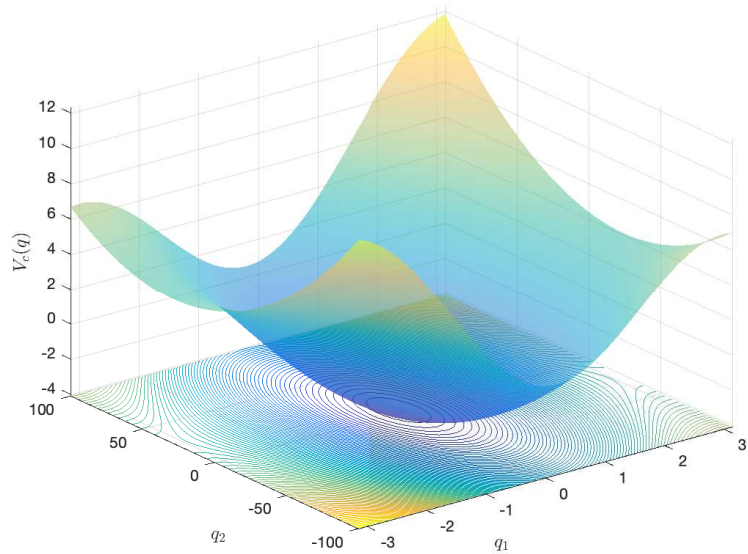}}
	\caption{Closed loop potatial energy function, $V_c(q)$.\label{figVc}}
	\end{figure}
	To provide system asymptotic stability, dissipation added as follows:
	\begin{eqnarray} \label{eq:59}
		u_{d}=-k_{v} G^{T} M^{-1} \widehat{M}_{c} \dot{q}, \qquad k_{v}=700
	\end{eqnarray}
	The results are presented in Figure \ref{figCartPendulumPotKv2} which illustrate time domain responses and $q_1$ and $q_2$ path via $V_c({q})$ is illustrated in Figures \ref{figCartPendulum2d},\ref{figCartPendulum3d}. Finally $\dot{H}({q}, \dot{{q}})$ given in (\ref{eq:30}) and ${H}({q}, \dot{{q}})$ are illustrated in Figure \ref{figHq}:
	\begin{figure}[ht]
	\centerline{\includegraphics[width=342pt,height=20pc,keepaspectratio]{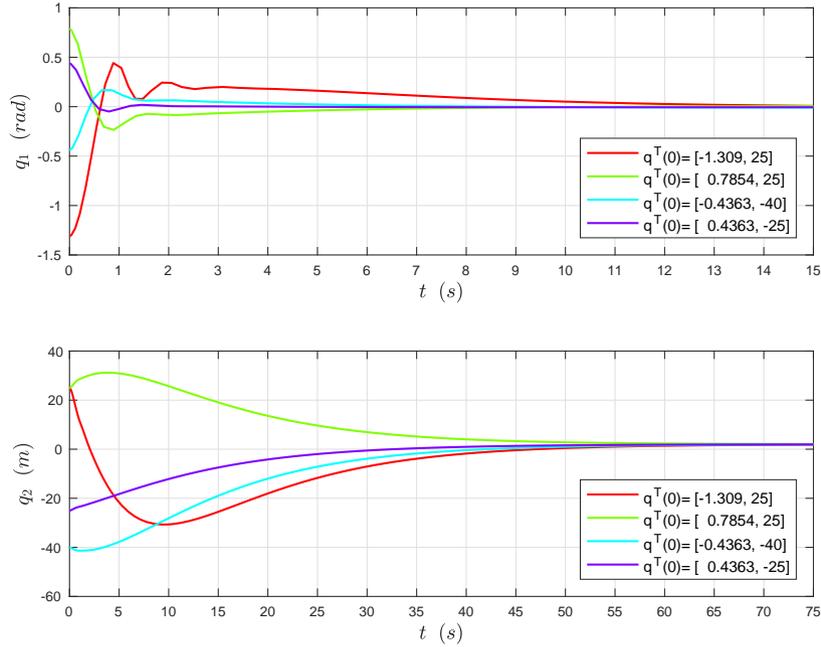}}
	\caption{System states for different initial conditions.\label{figCartPendulumPotKv2}}
	\end{figure}
	\begin{figure}[ht]
	\centerline{\includegraphics[width=342pt,height=20pc,keepaspectratio]{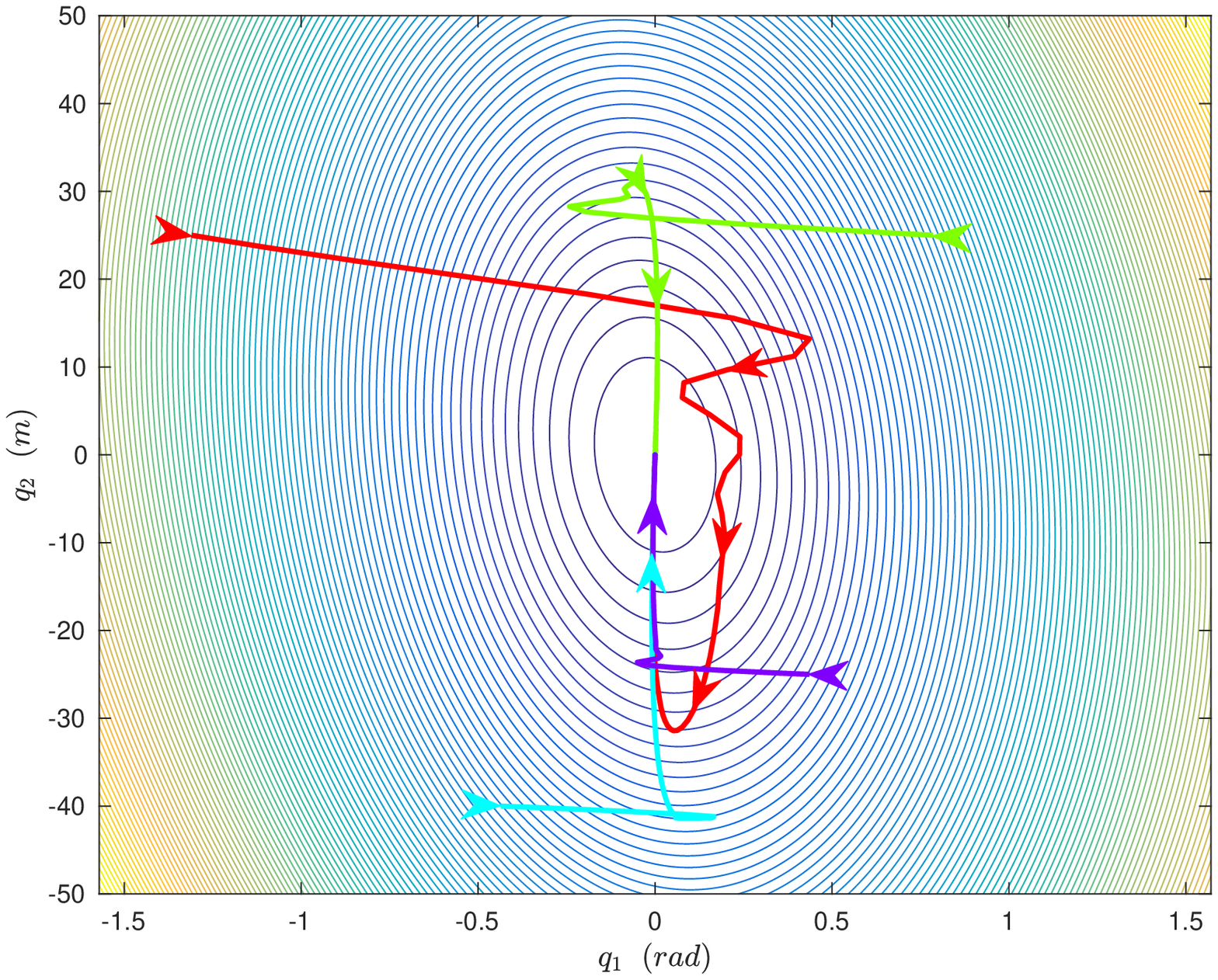}}
	\caption{Trajectory for different initial conditions.\label{figCartPendulum2d}}
	\end{figure}
	\begin{figure}[ht]
	\centerline{\includegraphics[width=342pt,height=20pc,keepaspectratio]{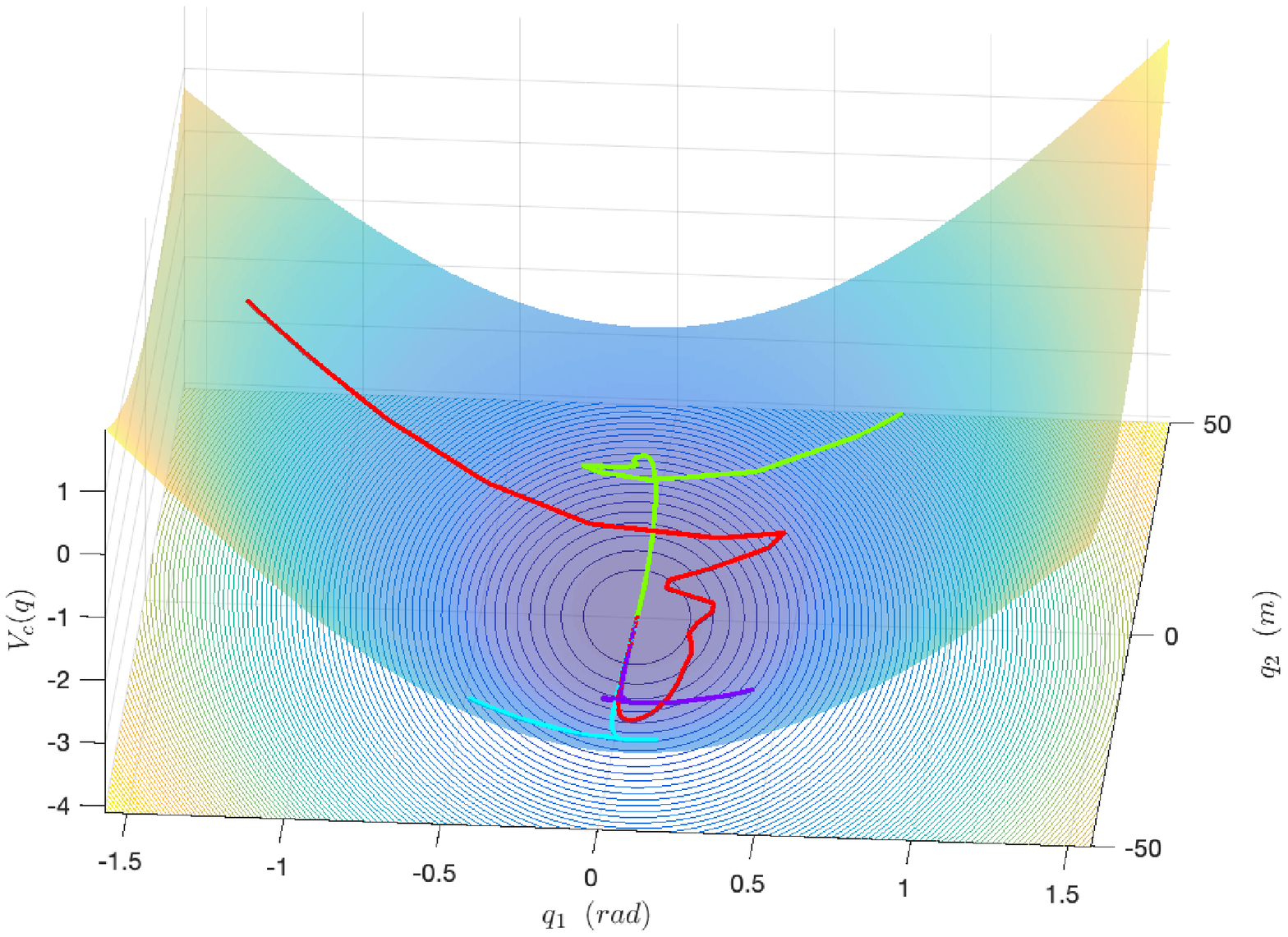}}
	\caption{Trajectory for different initial conditions.\label{figCartPendulum3d}}
	\end{figure}
	\begin{figure}[ht]
	\centerline{\includegraphics[width=342pt,height=20pc,keepaspectratio]{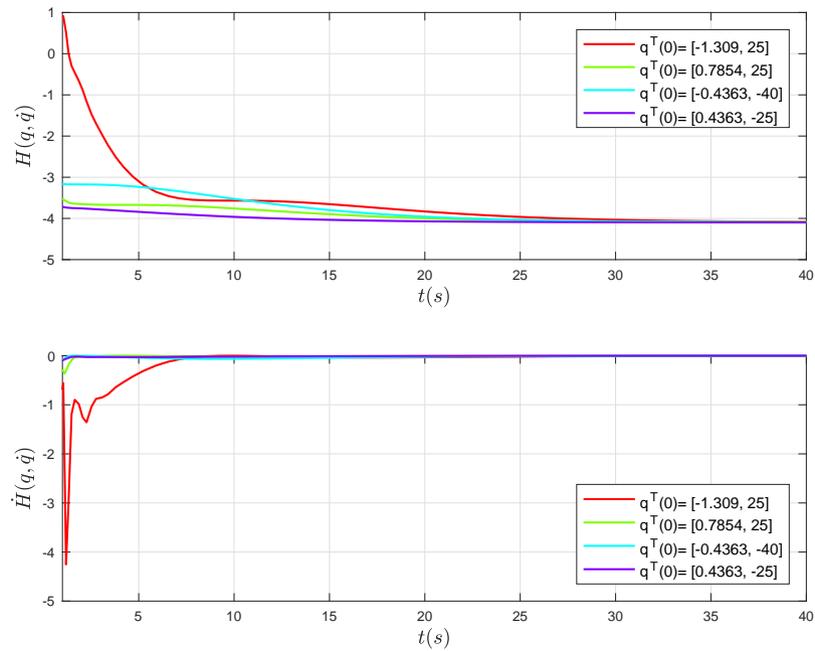}}
	\caption{$H(q,\dot{q})$ and $\dot{H}(q,\dot{q})$ for different initial conditions\label{figHq}}
	\end{figure}
\end{step}

\clearpage

\section{Conclusions}\label{sec6}

In this paper, the stability problem of underactuated EL system is considered. The standard method for stabilizing of EL systems is controlled Lagrangian method. In this method, the constructing of the control law requires to solve a set of nonlinear nonhomogeneous partial differential equation. In this study, we proposed a method to obtain an approximate solution of these PDEs based on an approximate model of the system. Furthermore, the stability analyzes of the closed loop system which is controlled by the control rule using found the proposed method here is done using non-monotonic Lyapunov functions

\bibliography{yildiz_goren-sumer}%

\end{document}